\documentclass{article}
\usepackage{ijcai26}

\usepackage{times}
\usepackage{soul}
\usepackage{url}
\usepackage[hidelinks]{hyperref}
\usepackage[utf8]{inputenc}
\usepackage[small]{caption}
\usepackage{graphicx}
\usepackage{amsmath}
\usepackage{amsthm}
\usepackage{booktabs}
\usepackage[vlined,linesnumbered,ruled]{algorithm2e}
\usepackage[switch]{lineno}
\usepackage{algorithmic}
\usepackage{xcolor}

\SetNlSty{scriptsize}{}{}

\newtheorem{example}{Example}

\def\gdmc{{\tt gDMC}}
\title{\gdmc: A Generic Distributed Model Counting Framework via Work-Stealing}

\author{
  Zhenghang Xu$^{1, 2}$\and
  Minghao Yin$^{1, 2}$ \and
  Junping Zhou$^{1, 2}$ \and
  Jean-Marie Lagniez$^3$ \\
  \affiliations
  $^1$School of Information Science and Technology, Northeast Normal University, Changchun, China \\
  $^2$Key Laboratory of Applied Statistics of MOE, Northeast Normal University, Changchun, China \\
  $^3$Univ. Artois, CNRS, CRIL, F-62300 Lens, France\\
  \emails
  \{xuzh121, ymh, zhoujp877\}@nenu.edu.cn,
  lagniez@cril.fr
}

\def\cnf{{CNF}}

\def\discount{{\tt DisCount}}
\def\sharpsat{{\tt SharpSAT-TD}}
\def\d4{{\tt D4}}
\def\countAntom{{\tt CountAntom}}
\def\dcountAntom{{\tt dCountAntom}}
\def\dmc{{\tt DMC}}
\def\ganak{{\tt GANAK}}

\def\bcp{{\tt BCP}}
\newcommand{\var}[1]{Var (#1)\xspace}
\newcommand{\norm}[1]{\left\|#1\right\|}

\newtheorem{definition}{Definition}
\newtheorem{proposition}{Proposition}

\begin{document}

\maketitle

\begin{abstract}
Propositional Model Counting ($\#\mathsf{SAT}$) is essential for probabilistic reasoning but faces scalability limits on single cores. Existing distributed approaches struggle with high initialization overheads (static decomposition) or rigid architecture. We propose a novel, generic framework for distributed \emph{exact} model counting. Leveraging C++ templates, our architecture decouples parallel orchestration from solving logic, enabling state-of-the-art solvers to be parallelized with minimal modification. We implement an adaptive work-stealing strategy that ensures effective load balancing. Experiments on competition benchmarks show that our approach achieves near-linear scalability and significantly outperforms existing distributed solvers.
\end{abstract}

\section{Introduction}
Model counting (MC), also known as $\#\mathsf{SAT}$, is the problem of computing the number of models (satisfying assignments) of a given propositional formula, typically in \cnf. 
Its direct extension, \emph{weighted model counting} (WMC), is of tremendous importance in a wide range of AI applications, including probabilistic inference~\cite{DBLP:conf/aaai/SangBK05,DBLP:journals/ai/ChaviraD08}, planning under uncertainty~\cite{DBLP:conf/aips/DomshlakH06}, and neural network verification~\cite{DBLP:conf/ccs/BalutaSSMS19}. 
Beyond AI, model counting is critical in domains such as hardware testing and reliability estimation~\cite{DBLP:conf/dft/FeitenSSCBPB12,DBLP:conf/qest/TeuberW21,DBLP:conf/aaai/Duenas-OsorioMP17,DBLP:conf/cav/GirolFB21,DBLP:conf/cav/MeiBL24}.

The computational hardness of model counting is underscored by Toda's theorem~\cite{DBLP:journals/siamcomp/Toda91}, which establishes that $\mathsf{P}^{\#\mathsf{P}}$ encompasses the entire polynomial hierarchy (PH).
Despite this theoretical hardness, significant progress has been made in the design of efficient sequential model counters. 
Modern search-based solvers, such as \d4~\cite{DBLP:conf/ijcai/LagniezM17}, \ganak~\cite{DBLP:conf/ijcai/SharmaRSM19}, and \sharpsat~\cite{DBLP:conf/cp/KorhonenJ21}, leverage Conflict-Driven Clause Learning (CDCL), component caching, and implicit Boolean Constraint Propagation (BCP) to prune the search space effectively. 
However, sequential performance is increasingly constrained by the physical limits of single-core clock speeds and the memory required to maintain massive component caches.

The importance of $\#\mathsf{SAT}$ has driven significant research into scalable algorithms, evidenced by recent model counting competitions~\cite{Competition2021_23}.
To handle the growing complexity of real-world instances, parallel and distributed computing offers a promising path forward. 
However, distributing $\#\mathsf{SAT}$ is fundamentally harder than distributing $\mathsf{SAT}$. 
In $\mathsf{SAT}$, the search can stop as soon as \emph{one} solution is found, allowing for aggressive, speculative strategies. 
In $\#\mathsf{SAT}$, the entire solution space must be covered, meaning the total runtime is determined by the slowest worker (the `straggler' problem).
While GPU-accelerated approaches exist~\cite{DBLP:conf/cp/FichteHR21}, they represent a distinct architectural paradigm; this paper focuses on distributed CPU-based solving.

State-of-the-art parallel and distributed solvers, such as \dcountAntom~\cite{DBLP:conf/cluster/BurchardSB16} and the more recent \discount~\cite{DBLP:conf/kr/LagniezL25}, typically rely on a \emph{Cube-and-Conquer} strategy. 
These methods decompose the search space into independent subproblems (cubes) which are then distributed across nodes.
However, this approach has several limitations. First, the initialization overhead is significant; as noted in~\cite{DBLP:conf/kr/LagniezL25}, the decomposition phase can be costly enough to negate performance gains on medium-hard instances. 
Second, static decomposition suffers from \emph{redundancy}: separate workers may unknowingly explore identical search spaces. For instance, when different cubes lead to the same connected component, resulting in duplicated effort.
Finally, achieving effective load balancing requires generating a massive number of cubes, a parameter that is notoriously difficult to tune and highly instance-dependent.

The distributed solver \dmc~\cite{DBLP:conf/ijcai/LagniezMS18} offers a compelling alternative to static partitioning by employing a dynamic work-stealing mechanism. 
Unlike \dcountAntom{}, \dmc{} avoids the exchange of formula fragments and learned clauses, significantly minimizing communication overhead. 
However, its implementation is tightly coupled to the \d4{} solver, requiring invasive modifications to the sequential core that hinder modularity and maintenance. 
Additionally, \dmc{} uses double-precision floating-point arithmetic for aggregation; this induces precision loss on instances with large model counts, rendering it unsuitable for exact counting tasks.

To overcome these limitations, we propose a novel C++ framework that leverages \emph{generic programming} (via Concepts and Templates) to enable efficient, solver-agnostic distributed model counting within the class of DPLL-style counters. Our architecture employs a modular stack that separates distributed orchestration from core solving logic, enabling state-of-the-art solvers to be parallelized with minimal modification. By resolving communication types at compile time, this design ensures zero-cost abstractions with negligible runtime overhead. Furthermore, we formally establish the conditions necessary for safe and complete work sharing.

Specifically, our contributions are threefold.
First, we provide a generic, template-based architecture that simplifies the integration of work-stealing into existing DPLL-style model counters. Second, we identify the formal requirements for valid work transfer, establishing the theoretical guarantees necessary for the soundness and completeness of the distributed search. Finally, we demonstrate through extensive experiments that our solution outperforms both static partitioning (e.g., \discount) and existing work-stealing solvers (e.g., \dmc), achieving near-linear speedups and solving significantly more instances.

\section{Preliminaries~\label{formal}}
Let $\mathcal{L}$ represent the propositional language formed from a finite set $\mathcal{P}$ of propositional variables, using the standard logical connectives ($\neg$, $\vee$, $\wedge$) and the Boolean constants $\top$ (true) and $\bot$ (false).
A \emph{literal} $\ell$ is either a variable $x \in \mathcal{P}$ or its negation $\bar{x}$. 
A \emph{cube} (or term) is a conjunction of literals, and a \emph{clause} is a disjunction of literals. 
When convenient, clauses and cubes are regarded as sets of literals. 
A \emph{CNF formula} is a conjunction of clauses, also viewed as a set of clauses.

Formulas are interpreted in the classical way: an \emph{interpretation} $\omega$ is a mapping from $\mathcal{P}$ to $\{0, 1\}$. 
An interpretation $\omega$ is a \emph{model} of a formula $\Sigma$ if $\Sigma$ evaluates to $1$ under $\omega$ (denoted $\omega \models \Sigma$). 
We use $\models$ to denote logical entailment and $\equiv$ for logical equivalence.
The notation $\norm{\Sigma}$ denotes the number of models (i.e., satisfying assignments) of the formula $\Sigma$ over its variables $\var{\Sigma}$.
The \emph{\#SAT problem} asks to compute $\norm{\Sigma}$ for a given formula and is the canonical example of a $\#\mathsf{P}$-complete problem.

\begin{example}\label{ex:running}
    As a running example, let us consider the CNF formula $\Sigma$ defined over $\var{\Sigma} = \{x_1, \ldots, x_7\}$:
    \normalsize{
    \begin{alignat*}{4}
            \bar{x_1} \vee \bar{x_2} \vee \bar{x_6}    \quad && \bar{x_1} \vee       x_4 \vee x_5       \quad && x_1       \vee x_4 \vee x_5 \quad && x_1 \vee x_3\\
            \bar{x_1} \vee          \bar{x_2} \vee x_6 \quad && \bar{x_1} \vee       x_2 \vee x_3       \quad && \bar{x_1} \vee x_2 \vee x_3 \vee \bar{x_7}  \\
            \bar{x_1} \vee          \bar{x_3} \vee x_6 \quad && \bar{x_1} \vee \bar{x_3} \vee \bar{x_6} \quad && \bar{x_1} \vee      x_3 \vee x_4 \vee x_7
\end{alignat*}
    }
    In this case, $\norm{\Sigma} = 24$.    
\end{example}

The Boolean Constraint Propagator, $\bcp(\Sigma)$, performs \emph{unit propagation} by iteratively enforcing unit clauses within the formula $\Sigma$.
$\bcp(\Sigma)$ returns $\bot$ (conflict) if unit propagation derives the empty clause (a unit refutation). 
Otherwise, it returns the set of literals $S$ derived from $\Sigma$.
The \emph{conditioning} of $\Sigma$ by a literal $\ell$, denoted $\Sigma|_{\ell}$, is obtained by replacing every occurrence of $\ell$ with $\top$ and every occurrence of its negation $\bar \ell$ with $\bot$.
This is followed by standard Boolean simplifications to remove satisfied clauses and removing $\bot$ literals from the remaining clauses until the formula is minimal.
This notion extends to a set of literals $S = \{\ell_1, \ldots, \ell_m\}$ by iterative application: $\Sigma|_{S} = (\dots(\Sigma|_{\ell_1})\dots)|_{\ell_m}$.

\begin{example}[Example~\ref{ex:running} cont'ed]
    Assume we branch on $x_7$ and $x_1$. Then $\bcp(\Sigma \wedge x_7 \wedge x_1) = \{x_7, x_1\}$. Conditioning yields:
$\Sigma|_{\{x_7,x_1\}} = \{ 
    (x_2 \vee  x_3), (x_4 \vee  x_5), (\bar{x_2} \vee  x_6),
    (\bar{x_3} \vee \bar{x_6}), (\bar{x_2} \vee \bar{x_6}), (\bar{x_3} \vee  x_6)\}
    $.
    Branching on the literals $\{x_7, \bar{x_1}\}$ yields $\bcp(\Sigma \wedge x_7 \wedge \bar{x_1}) = \{x_7, \bar{x_1}, x_3\}$ and $\Sigma|_{\{x_7,\bar{x_1},x_3\}} = \{ (x_4 \vee x_5) \}$.
\end{example}

State-of-the-art exact model counters typically extend the CDCL architecture used in SAT solving. 
Unlike SAT solvers, which terminate upon finding a single model, model counters must traverse the entire solution space. 
To do this efficiently, they rely on three key techniques: \textbf{Clause Learning}, where the solver learns a new clause upon conflict to prune future search; \textbf{Component Caching}, which stores results of previously solved subproblems to avoid redundant computations~\cite{DBLP:conf/sat/SangBBKP04}; and \textbf{Component Decomposition}, which exploits independent sub-formulas $\Sigma_1, \dots, \Sigma_k$ by computing the product of their counts $\prod_{i=1}^k \norm{\Sigma_i}$.

The standard model counting procedure generally follows the structure of Algorithm~\ref{alg:count} (presented in Section~\ref{sec:stack}, ignoring the modifications highlighted in blue and red).
First, $\bcp(\Sigma)$ is executed to gather the set of implied literals $S$, and the formula is simplified to $\Sigma' = \Sigma|_S$.
If a conflict arises (i.e., $\Sigma' = \bot$), the function returns $0$.
If all clauses are satisfied (i.e., $\Sigma' = \emptyset$), the algorithm returns $2^{|\var{\Sigma}| - |S|}$, which accounts for the single model of the implied literals and the $2^k$ combinations of any remaining variables.
Otherwise, the algorithm checks the cache. 
If the instance is new, it decomposes the formula into connected components. 
If multiple components exist, they are solved recursively, and their results are multiplied. 
If the formula forms a single component, the algorithm selects a branching variable $x$, computes the sum of the counts for the two branches ($\Sigma' \wedge x$ and $\Sigma' \wedge \bar x$), and caches the result.
Finally, in both the component and branching cases, the computed value $val$ is scaled by $2^{|\var{\Sigma}| - |\var{\Sigma'}| - |S|}$. 
This factor corrects for variables that were present in $\Sigma$ but eliminated during simplification (e.g., variables that became free), ensuring the count remains correct with respect to the original variable set.

Crucially, the interaction between dynamic clause learning and component caching requires careful management.
When a conflict occurs, a learned clause is added to the formula. 
Although this clause is a logical consequence of the original formula, its presence modifies the implication landscape. 
As established in~\cite{DBLP:conf/sat/SangBBKP04}, naively retrieving cached components in the presence of different learned clauses can lead to incorrect counts.
We explicitly address this constraint in Section~\ref{gdmc}, where we define the necessary conditions for safe job sharing in a distributed setting.

The efficiency of the solver heavily relies on the variable selection heuristic ($\mathsf{chooseVariable}$). 
Modern approaches exploit the formula's \emph{primal graph} and its \emph{tree decomposition} $(T, \{B_t\})$, a mapping of variables into a tree structure satisfying standard connectivity properties~\cite{DBLP:journals/jal/RobertsonS86}.
Although real-world instances often have high \emph{treewidth} (defined as $\max |B_t| - 1$), heuristics that prioritize variables appearing in the ``top'' bags of the decomposition significantly improve performance by fostering component decomposition~\cite{DBLP:conf/cp/KorhonenJ21}.

\section{The Explicit Stack Framework~\label{sec:stack}}
Exact model counting traverses an \emph{AND-OR Tree}, summing results at \textbf{OR-Nodes} (decisions) and multiplying at \textbf{AND-Nodes} (decompositions). 
In standard DFS solvers, pending work remains implicit within the recursion stack. 
This implicit representation creates a major barrier for distribution, as recursion frames cannot be easily ``detached'' to offload tasks to other workers.

\subsection{The Explicit Search State}

To enable flexible \emph{work-stealing}, \gdmc{} reifies the recursion stack into a manageable data structure: the \textbf{Explicit Stack} $\mathcal{S}$. 
Crucially, we do not materialize the entire tree (which grows exponentially); we maintain only the \emph{current active path}. 
Any node on this path with an unexplored sibling represents a potential \textbf{job} that can be offloaded.

It is important to note that $\mathcal{S}$ does not behave strictly as a standard Last-In-First-Out (LIFO) stack during distributed execution. 
In a purely sequential setting, a node is popped as soon as its local sub-problems are solved. 
In \gdmc{}, however, a node $\nu$ may have offloaded portions of its work to other solvers. 
Consequently, the node serves a dual purpose: it is both a controller for local iteration and a dependency tracker for remote jobs. 
Even if the local set of tasks is empty, the node must persist in the stack if it is still waiting for remote results. 
It acts as an aggregation point, waiting to combine the local accumulator with the results of the outstanding shared tasks. 
Only when all local and remote obligations are met is the node considered fully resolved and removed from $\mathcal{S}$.

\begin{definition}[Stack Nodes]
The stack $\mathcal{S}$ is a sequence of nodes $\nu = \langle \mathcal{F}, v, \mathcal{I}, \mu \rangle$ comprising:
\begin{itemize}
    \item $\mathcal{F}$: The set of pending sub-problems to be solved locally.
    \item $v$: The local accumulator for the model count.
    \item $\mathcal{I}$: The set of identifiers for sub-problems offloaded.
    \item $\mu$: A local scaling factor (representing variables eliminated at this level), always initialized to $1$.
\end{itemize}
We distinguish two node types based on their initialization and aggregation logic. A \textbf{Decision Node (Sum)} is initialized with branches $\{\Sigma \wedge x, \Sigma \wedge \bar{x}\}$, where the accumulator $v$ starts at $0$ as the additive identity. Conversely, a \textbf{Decomposition Node (Product)} is initialized with disjoint components $\{\Sigma_1, \dots, \Sigma_k\}$ and an accumulator $v$ starting at $1$, the identity for multiplication.
\end{definition}

\begin{example}[Ex.~\ref{ex:running} cont'ed]\label{ex:stack}
Branching on $x_7, x_1$ partitions the formula into components $\mathcal{C}_1 = \{x_4 \vee x_5\}$ and $\mathcal{C}_2 = \{(x_2 \vee x_3),\allowbreak (\bar{x_2} \vee x_6),\allowbreak (\bar{x_3} \vee \bar{x_6}),\allowbreak (\bar{x_2} \vee \bar{x_6}),\allowbreak (\bar{x_3} \vee x_6)\}$.
The stack is:
$\mathcal{S} = \big( 
    \langle \{\Sigma|_{\bar{x_7}}\}, 0, \emptyset, 1 \rangle, \allowbreak
    \langle \{\Sigma|_{x_7, \bar{x_1}}\}, 0, \emptyset, 1 \rangle, \allowbreak
    \langle \{\mathcal{C}_1, \mathcal{C}_2\}, 1, \emptyset, 1 \rangle \big)$.
\end{example}

While the formal definition describes nodes as containing explicit formulas $\mathcal{F}$, materializing these is inefficient in practice. Instead, \gdmc{} leverages the fact that any search tree sub-problem is uniquely determined by its path from the root. Specifically, a task $\Sigma_{sub}$ is represented implicitly by: the original formula $\Sigma_{root}$, the current \textbf{partial assignment} (unit literals from decisions and propagations), and the set of \textbf{relevant variables} restricted to the current connected component.

Consequently, when we describe ``pushing a formula $\Sigma \wedge x$'' onto the stack, the implementation simply records the decision literal $x$, the set of propagated unit literals, and the variables identifying the current component. 
The solver reconstructs the effective formula by applying the current assignment to $\Sigma_{root}$ and filtering by the component's variables. 
Similarly, when a job is offloaded to a remote worker, we transmit only the sequence of decisions (the ``cube'') that defines the path to that node, along with the set of variables for the component, allowing the remote worker to reconstruct the exact same state locally.

\subsection{The Challenge of Distributed Caching}

Integrating work-stealing into model counting is complicated by \textbf{Component Caching}. The standard caching invariant requires that if a formula $\Sigma$ is solved, its count $\norm{\Sigma}$ is stored for immediate retrieval. However, offloading $\Sigma$ to a remote worker creates a \emph{broken dependency}: the local solver cannot wait for the result without losing parallelism, nor can it cache a ``future'' result without prohibitive overhead. This ``hole'' propagates upward, preventing all parent nodes from being cached.

To mitigate this, \gdmc{} restricts sharing to the \textbf{root of the stack} (the least recently generated sub-problems). This topological constraint ensures the active frontier at the top of the stack remains entirely local and free of remote dependencies, allowing sub-trees to be computed and cached synchronously. Furthermore, anchoring sharing to the stack base preserves a low-overhead linear structure, avoiding the need to manage a complex AND-OR dependency tree in memory.

\subsection{Stack Primitives}

As $\mathcal{S}$ retains persistent nodes awaiting remote results, the logical frontier diverges from the physical top. We bridge this by maintaining a pointer $pos$ to the active context, denoted $\nu_{curr} = \mathcal{S}[pos] = \langle \mathcal{F}, v, \mathcal{I}, \mu \rangle$.

\begin{itemize}
    \item $\mathcal{S}.\mathsf{pushOr}(\Sigma_1, \Sigma_2)$: Pushes a new \textbf{Decision Node} onto $\mathcal{S}$. 
    This operation is valid only if $pos$ corresponds to the physical top of the stack. 
    We increment $pos$ and initialize the new tuple as $\langle \{\Sigma_1, \Sigma_2\}, 0, \emptyset, 1 \rangle$.
    
    \item $\mathcal{S}.\mathsf{pushAnd}(\{\Sigma_1, \dots, \Sigma_k\})$: Pushes a new \textbf{Decomposition Node} onto $\mathcal{S}$. 
    This operation is valid only if $pos$ refers to the physical top of the stack. 
    We increment $pos$ and initialize the tuple as $\langle \{\Sigma_1, \dots, \Sigma_k\}, 1, \emptyset, 1 \rangle$.

    \item $\mathcal{S}.\mathsf{alreadyShared}(\Sigma)$: Checks if the sub-problem $\Sigma$ is present in the pending formula set $\mathcal{F}$ of $\nu_{curr}$.
    Returns $\mathtt{true}$ if $\Sigma \notin \mathcal{F}$ (implying $\Sigma$ has been stolen or removed by a concurrent worker).
    Returns $\mathtt{false}$ otherwise.
    
    \item $\mathcal{S}.\mathsf{process}(\Sigma)$: Begins local processing of $\Sigma$ by removing it from $\mathcal{F}$ in $\nu_{curr}$. If $\mathcal{F}$ becomes empty, local work for the node is complete. If the remote identifier set $\mathcal{I}$ is also empty, the node is fully resolved and popped from $\mathcal{S}$ (with $pos$ decremented). Otherwise, the node persists as a dependency tracker for remote results, and we only decrement $pos$ to return control to the parent context.

   \item $\mathcal{S}.\mathsf{identity}()$: Returns the neutral element for the operation associated with $\nu_{curr}$ ($0$ if $\nu_{curr}$ is a Decision Node, $1$ if it is a Decomposition Node).
    
    \item $\mathcal{S}.\mathsf{isShared}()$: Checks if the current node $\nu_{curr}$ has partially offloaded its work. 
    Returns $\mathtt{true}$ if the set of remote identifiers $\mathcal{I}$ is not empty.
        
    \item $\mathcal{S}.\mathsf{complete}(c \times \mu)$: Updates the accumulator $v$ of $\nu_{curr}$ with the result of a sub-computation.
    The value to aggregate is derived from $c$ (the raw count computed by the solver) and $\mu$ (the scaling factor, typically $2^k$, for variables free in the sub-problem).    
\end{itemize}

\subsection{Integration in the Counting Algorithm}

Algorithm~\ref{alg:count} illustrates the integration of the \gdmc{} framework into a standard DPLL-style model counter. 
The implementation utilizes two global controllers: the explicit stack manager $\mathcal{S}$, which maintains the search state and aggregates results, and the communication manager  (referenced as \texttt{co}), which handles network protocols and work redistribution. 
The core counting logic (comprising unit propagation, component analysis, and variable branching) is shown in black and adheres to the classical formulation. 
The \textbf{\textcolor{blue}{blue blocks}} mark the synchronization points with $\mathcal{S}$, while the \textbf{\textcolor{red}{red block}} orchestrates the interaction with remote workers to enable dynamic load balancing.

The first interaction occurs immediately upon entering the function. 
Before performing any propagation, the solver queries $\mathcal{S}.\mathsf{alreadyShared}(\Sigma)$. 
This acts as a guard clause: if the current sub-problem $\Sigma$ has been removed from the stack (stolen by another worker), the function aborts immediately by returning the neutral element $\mathcal{S}.\mathsf{identity}()$. 
This prevents the local worker from solving a task that is no longer its responsibility. 
If the task remains, $\mathcal{S}.\mathsf{process}(\Sigma)$ is called to mark $\Sigma$ as ``busy,'' ensuring it cannot be stolen while the local worker is actively solving it.

Whenever the solver decomposes the problem, it registers the new sub-problems in the stack to expose the frontier of work. 
Specifically, before iterating over a set of connected components, the algorithm calls $\mathcal{S}.\mathsf{pushAnd}$ to create a new decomposition node containing these components. 
Similarly, prior to making a decision on a variable $x$, $\mathcal{S}.\mathsf{pushOr}$ is invoked to record the two resulting branches ($\Sigma \wedge x$ and $\Sigma \wedge \bar x$). 
These operations ensure that the explicit stack always reflects the current available tasks for potential thieves.

After the recursive calls return, the local solver has computed a partial result $val$. 
However, before returning this value, it must query $\mathcal{S}.\mathsf{isShared}()$. 
If this check returns $\mathtt{true}$, it indicates that sibling tasks were offloaded and the local result is incomplete. 
In this case, the algorithm calls $\mathcal{S}.\mathsf{complete}$ to aggregate $val$ into the persistent stack node and returns $\mathcal{S}.\mathsf{identity}()$, ensuring that the parent recursive call aggregates only the neutral element. 
Conversely, if the node was not shared, it implies the sub-problem was processed entirely locally. 
The stack node is effectively popped, and the computed $val$ is returned normally to the parent to continue the standard aggregation.

We defer the discussion of the \textbf{\textcolor{red}{red block}} in Algorithm~\ref{alg:count} to Section~4, where we detail the Master-Worker coordination and the network protocols required to interconnect individual solver instances.

\SetInd{0.5em}{0.5em}
\begin{algorithm}[t]
    \small
\caption{$\mathsf{count}(\Sigma)$~\label{alg:count}}
\SetKwInOut{Input}{Input}
\SetKwInOut{Output}{Output}

\Input{A CNF formula $\Sigma$}
\Output{The number of models $\norm{\Sigma}$}

\BlankLine{}

{\color{red}
\If{$\mathcal{S}.\mathsf{shouldPoll()}$ {\bf and} $\mathsf{co.shareRequest()}$}
{
$\mathsf{co.transferWorkToThiefs}(\mathcal{S})$\;
}
}

{\color{blue}
\lIf{$\mathcal{S}.\mathsf{alreadyShared}(\Sigma)$}{\Return{$\mathcal{S}.\mathsf{identity()}$}}
$\mathcal{S}.\mathsf{process}(\Sigma)$\;
}

$S \leftarrow \bcp(\Sigma); \quad \Sigma' \leftarrow \Sigma|_{S}$\;
\lIf{$\Sigma' = \bot$}{\Return{} $0$}
\lIf{$\Sigma' = \emptyset$}{\Return{} $2^{|\var{\Sigma}| - |S|}$}
\lIf{$\mathsf{inCache}(\Sigma')$}{\Return{} $\mathsf{getCache}(\Sigma')$}

$Components \leftarrow \mathsf{getConnectedComponents}(\Sigma')$\;

\If{$|Components| > 1$}{
    {\color{blue}
    $\mathcal{S}.\mathsf{pushAnd}(Components)$\;
    }

    $val \leftarrow 1$\;
    \lFor{$C \in Components$}{
        $val \leftarrow val \times \mathsf{count}(C)$
    }

    {\color{blue}
    \If{$\mathcal{S}.\mathsf{isShared()}$}
    {
        $\mathcal{S}.\mathsf{complete}(val \times 2^{|\var{\Sigma}| - |\var{\Sigma'}| - |S|})$\;
        \Return{$\mathcal{S}.\mathsf{identity()}$}\;
    }
    }

    \Return{} $val \times 2^{|\var{\Sigma}| - |\var{\Sigma'}| - |S|}$\;
}

$x \leftarrow \mathsf{chooseVariable}(\Sigma')$\;

{\color{blue}
    $\mathcal{S}.\mathsf{pushOr}(\Sigma' \wedge x, \Sigma' \wedge \bar x)$\;
}

$val \leftarrow \mathsf{count}(\Sigma' \wedge x) + \mathsf{count}(\Sigma' \wedge \bar x)$\;

{\color{blue}
    \If{$\mathcal{S}.\mathsf{isShared()}$}
    {
        $\mathcal{S}.\mathsf{complete}(val \times 2^{|\var{\Sigma}| - |\var{\Sigma'}| - |S|})$\;
        \Return{$\mathcal{S}.\mathsf{identity()}$}\;
    }
}

$\mathsf{addToCache}(\Sigma', val)$\;
\Return{} $val \times 2^{|\var{\Sigma}| - |\var{\Sigma'}| - |S|}$\;
\end{algorithm}

\section{Distributed Orchestration~\label{gdmc}}
\begin{algorithm}[t]
    \small
\caption{\gdmc{} Global Orchestration~\label{alg:gdmc}}
\SetKwInOut{Input}{Input}
\SetKwInOut{Output}{Output}

\Input{A CNF formula $\Sigma$ and a set of workers $\mathcal{W}$}
\Output{The number of models $\norm{\Sigma}$}

\BlankLine{}
\tcp{Initialization Phase}
$\Sigma \gets \mathtt{preprocess}(\Sigma)$\;
$\mathtt{broadcast}(\Sigma, \mathcal{W})$\;
$\mathtt{coordinateHeuristics}(\mathcal{W})$\;

\BlankLine{}
\tcp{Distributed Counting}
Select root worker $w_{r} \in \mathcal{W}$\;
$\mathtt{sendJob}(\Sigma, w_{r}, 0)$; $\quad idTask \gets 1$\;
$\mathit{idle} \gets \mathcal{W} \setminus \{w_{r}\}$; $\quad \mathit{contacted} \gets \mathsf{nil}$\;

\While{$\mathit{idle} \neq \mathcal{W}$}{
    \tcp{Select work source}
    \If{$\mathit{contacted} = \mathsf{nil}$ \textbf{and} $\mathit{idle} \neq \mathcal{W}$}{
        $\mathit{contacted} \gets \text{select one from } (\mathcal{W} \setminus \mathit{idle})$\;
        $\mathtt{requestWork}(\mathit{contacted})$\;
    }

    \tcp{Offload tasks to idle pool}
    \While{$\mathtt{isResponding}(\mathit{contacted})$ \textbf{and} $\mathtt{hasWork}(\mathit{contacted})$ \textbf{and} $\mathit{idle} \neq \emptyset$}
    {
        $w_{idle} \gets \mathtt{pop}(\mathit{idle})$\;
        $\mathtt{pairWorkers}(\mathit{contacted}, w_{idle}, idTask\text{++})$\;
    }

    \tcp{Update worker states}
    \ForEach{$w \in \mathcal{W} \setminus \mathit{idle}$}{
        \If{$\mathtt{isFinished}(w)$}{
            $\mathit{idle} \gets \mathit{idle} \cup \{w\}$\;
            \lIf{$w = \mathit{contacted}$}{
                $\mathit{contacted} \gets \mathsf{nil}$
            }
        }
    }
}

\BlankLine{}
\Return{$\mathtt{reduceResults}(\mathcal{W})$}\;
\end{algorithm}

Algorithm~\ref{alg:gdmc} employs a centralized master-worker architecture.
Initialization (lines 1--3) follows the \texttt{discount} protocol~\cite{DBLP:conf/kr/LagniezL25}: the Master broadcasts $\Sigma$ and enforces the cluster's best tree decomposition as a static branching order. The root task ($idTask = 0$) is then assigned to a single worker (line 5). 
Once initialized, the full problem is assigned to a single root worker via a blocking $\mathtt{sendJob}$ call with an initial task identifier $idTask = 0$ (line 5), while remaining workers enter an \emph{idle} state. To simplify synchronization, \gdmc\ implements a \emph{point-to-group sharing strategy}: the Master targets only one active worker for work-stealing at a time, tracked by the variable $\mathit{contacted}$.

While at least one worker is active, the Master monitors the cluster. 
If $\mathit{contacted} = \mathsf{nil}$, the Master selects a new target and sends a work request. 
Upon a positive non-blocking $\mathtt{isResponding}$ check, the Master facilitates direct transfers from the target to multiple idle workers (lines 11--13), assigning each sub-task a unique $idTask$ for aggregation.
This one-to-many transfer mitigates latency and rapidly re-integrates workers.
Finally, completed workers return to the $\mathit{idle}$ set; if the $\mathit{contacted}$ worker finished, the pointer is reset.

When all workers return to the idle state, the Master initiates result aggregation (line 18). Unlike \dmc{}, which streams partial expressions to the Master, \gdmc\ maintains stacks locally. To resolve a task, a worker must complete any previously shared nodes. When two workers communicate, the ``thief'' tracks the ``victim'' to return the computed value once the sub-stack for that $idTask$ is resolved. This propagates values back through the dependency chain until the root task ($idTask=0$) is completed.

\subsection{Worker-Master Integration}

We now detail the integration logic (the red block in Algorithm~\ref{alg:count}) that links the individual solver to the global orchestrator. To minimize communication overhead, the function $\mathsf{shouldPoll}$ acts as a throttle, ensuring the worker polls for Master requests only periodically (e.g., every $k=1,000$ decisions) to avoid system-call degradation.

The functions $\mathsf{shareRequest}$ and $\mathsf{transferWorkToThiefs}$ manage the worker's network interaction. $\mathsf{shareRequest}$ checks for an incoming $\mathtt{requestWork}$ signal. If detected, the worker begins a blocking communication to indicate if it has viable work. A worker may refuse to share if it is processing its last available formula or if the remaining sub-problems are too small to justify the communication cost. To prevent a ``ping-pong'' effect between active and idle states, we enforce a granularity threshold; for instance, formulas with fewer than 30 variables are solved locally rather than shared.

Finally, $\mathsf{transferWorkToThiefs}$ handles the offloading of a pending node to a thief. The victim performs a blocking send of the task data (variables and partial assignments) to the thief. Crucially, the victim does not wait for the result. It simply adds the assigned $idTask$ to the set of remote identifiers $\mathcal{I}$ of the current stack node (marking it as persistent) and immediately resumes its local search.

\subsection{Distributed Caching and Soundness}

Sharing arbitrary nodes from the stack is not permissible. 
As previously mentioned, sharing is restricted to nodes whose parents have already been shared. 
However, this topological constraint is insufficient to prevent \emph{cache inconsistency}, particularly when dealing with unsatisfiable sub-formulas.

Cache inconsistency occurs when the value $c$ stored in the cache for a formula $\Sigma$ satisfies $c \neq \norm{\Sigma}$. This phenomenon is a byproduct of conflict-driven clause learning. 
While learned clauses are treated as part of the formula for $\bcp$, they are not necessarily part of the original local sub-problem $\Sigma$. If a learned clause $\alpha$ is derived from variables in $\var{\Sigma}$ but is not a logical consequence of $\Sigma$ (i.e., $\Sigma \not\models \alpha$), using $\alpha$ to simplify $\Sigma$ can result in an incorrect model count.

This issue is well-documented in the context of component caching~\cite{DBLP:conf/sat/SangBBKP04}: when a sub-formula is found to be unsatisfiable, any cache entries added after the initial conflict must be invalidated. In a local solver, this cleaning is straightforward. In a distributed setting, however, sharing a sub-problem may prevent a worker from ever detecting the specific conflict that would have triggered a cache invalidation, leading to the propagation of ``poisoned'' results. 

\begin{example}[Example~\ref{ex:running} cont'ed]\label{ex:corrupt}
Suppose we first solve branch $\bar{x_7}$ completely (count 12), then explore $x_7$. Inside $x_7$, we branch on $x_1$. We share $\bar{x_1}$ as Task 1, and after decomposing the $x_1$ branch, we share component $\mathcal{C}_2$ as Task 2. 
Then
$\mathcal{S} = \big( 
    \langle \emptyset, 12, \emptyset, 0 \rangle, \allowbreak
    \langle \emptyset, 0, \{1\}, 0 \rangle, \allowbreak
    \langle \{\mathcal{C}_1\}, 1, \{2\}, 1 \rangle \big)$.

Assume that solving $\Sigma|_{\bar{x_7}}$ generated the learnt clause $\alpha = (\bar{x_1} \vee x_4)$. 
When solving the local component $\mathcal{C}_1$, this clause forces $x_4$ to be true, reducing $\mathcal{C}_1$'s count to 2 instead of the correct 3. 
If the remote component $\mathcal{C}_2 \models \bot$, the solver detects the conflict too late: the corrupt count for $\mathcal{C}_1$ (2) may already be computed and cached, causing incorrect results in future lookups, or worse, the inconsistency might never be detected.
\end{example}

We demonstrate that such inconsistencies only arise in branches containing an unsatisfiable connected component, thus ensuring soundness.

\begin{proposition}
    Let $\rho$ be the set of unit literals (the current path) such that $\Sigma$ is a connected component of $\Sigma_{\textit{root}}|_{\rho}$.
    Let $c$ be the model count computed for $\Sigma$. If every connected component of $\Sigma_{\textit{root}}|_{\rho}$ is satisfiable, then $c = \norm{\Sigma}$.
\end{proposition}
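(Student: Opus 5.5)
The argument compares the model sets of the learned-clause database restricted to $\Sigma$ against those of $\Sigma$ itself. Let $\Gamma$ be the set of learned clauses present when $\Sigma$ is solved. Every clause $\alpha \in \Gamma$ is derived by resolution from the original clauses and earlier learned clauses, so $\Sigma_{\textit{root}} \models \alpha$. The solver actually computes $c = \norm{(\Sigma_{\textit{root}} \wedge \Gamma)|_{\rho}\ \text{restricted to}\ \var{\Sigma}}$. In other words, $c$ is the count of $\Sigma \wedge \Gamma_\Sigma$, where $\Gamma_\Sigma$ denotes the clauses of $\Gamma|_{\rho}$ that mention only variables of $\Sigma$. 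We assume, as in the implementation, that learned clauses straddling components are not used to merge components. Since adding clauses can only remove models, $c \le \norm{\Sigma}$. The task is therefore to show that no model of $\Sigma$ is removed, i.e.\ $\Sigma \models \Gamma_\Sigma$.

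\textbf{Key step.} Let $\Sigma_1 = \Sigma, \Sigma_2, \dots, \Sigma_k$ be the connected components of $\Sigma_{\textit{root}}|_{\rho}$. By hypothesis each $\Sigma_i$ is satisfiable, so fix a model $\omega_i$ of each $\Sigma_i$ for $i \ge 2$. Now take an arbitrary model $\omega$ of $\Sigma$. The components are variable-disjoint, so $\omega$ combines with $\rho$ and the $\omega_i$ ($i\ge 2$) into a single total interpretation $\omega^*$; any variables outside all components are set arbitrarily. Then $\omega^* \models \Sigma_{\textit{root}}$, because every clause of $\Sigma_{\textit{root}}$ is either satisfied by $\rho$ or reduces to a clause of some component. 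Since $\Sigma_{\textit{root}} \models \alpha$ for every learned $\alpha$, we get $\omega^* \models \alpha$.

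Consider any $\alpha$ contributing to $\Gamma_\Sigma$, so that $\alpha|_{\rho}$ mentions only variables of $\Sigma$. Then $\omega^* \models \alpha$ and $\omega^*$ agrees with $\rho$, which gives $\omega \models \alpha|_{\rho}$. Every model of $\Sigma$ therefore survives, and $c = \norm{\Sigma}$. The same argument covers the scaling factors $2^{|\var{\Sigma}| - |\var{\Sigma'}| - |S|}$: propagated literals $S$ obtained using $\Gamma$ are entailed by $\Sigma$ by the identical reasoning, so they eliminate no model.

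\textbf{Main obstacle.} The hardest part is making rigorous that the computed $c$ really equals the count of $\Sigma \wedge \Gamma_\Sigma$ for a fixed $\Gamma$. This requires induction over the recursion of Algorithm~\ref{alg:count}: at each inner node, the sub-path $\rho'$ extends $\rho$ and the sub-components are again components of $\Sigma_{\textit{root}}|_{\rho'}$, and cached values are reused. I would first prove, by induction on the recursion depth, that each returned value (or, when shared, each value passed to $\mathcal{S}.\mathsf{complete}$) equals the true count whenever the sibling components at that point are satisfiable.

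Subtrees containing an unsatisfiable component return $0$ regardless, since the product at the decomposition node is $0$. The problematic case is therefore exactly the one illustrated in Example~\ref{ex:corrupt}, and the hypothesis of the proposition is what excludes it.
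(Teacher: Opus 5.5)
Your proposal is correct and uses essentially the same argument as the paper: learned clauses are consequences of $\Sigma_{\textit{root}}$, the formula $\Sigma_{\textit{root}}|_{\rho}$ is the conjunction of its variable-disjoint components, and when all of these are satisfiable, any clause over the variables of $\Sigma$ entailed by that conjunction is already entailed by $\Sigma$. You argue directly rather than by contradiction, and your gluing of a model of $\Sigma$ with $\rho$ and models of the sibling components into $\omega^*$ rigorously proves the step the paper only asserts informally (that such an implication could only be ``forced'' by an unsatisfiable sibling); one caveat is that your recursion-level induction would additionally need the invariant that cache entries are sound (kept by the solver's local invalidation), which the paper likewise takes for granted.
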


\begin{proof}
    Let $\Delta$ be the learned clauses. Assume for contradiction a cache inconsistency ($c \neq \norm{\Sigma}$) 
    occurs even if all components of $\Sigma_{\textit{root}}|_{\rho}$ are satisfiable. This implies the existence of $\alpha \in \Delta|_{\rho}$ where $\var{\alpha} \subseteq \var{\Sigma}$ but $\Sigma \not\models \alpha$.

    By construction, every clause learned is a logical consequence of the original formula, implying $\Sigma_{\textit{root}}|_{\rho} \models \alpha$. Since $\Sigma_{\textit{root}}|_{\rho}$ is partitioned into independent connected components $\{\Sigma, \mathcal{C}_1, \dots, \mathcal{C}_n\}$, the formula is logically equivalent to the conjunction of these components: $\Sigma_{\textit{root}}|_{\rho} \equiv \Sigma \wedge \mathcal{C}_1 \wedge \dots \wedge \mathcal{C}_n$. 
    
    If all components are satisfiable, $\alpha$ must be a logical consequence of $\Sigma$ alone. For $\alpha$ to be a consequence of the global formula without being a consequence of $\Sigma$, it would necessarily require the unsatisfiability of at least one other component $\mathcal{C}_i$ to ``force'' the implication through a contradiction. Since all components are satisfiable by hypothesis, $\Sigma \models \alpha$ must hold. This contradicts the assumption that $\Sigma \not\models \alpha$.
\end{proof}

Cache inconsistency has been addressed in distributed model counting. 
In \countAntom~\cite{DBLP:conf/sat/BurchardSB15}, the authors ignore the problem; however, their architecture differs. 
\countAntom{} manages a single global search tree at the Master, whereas our approach involves multiple workers independently constructing portions of the search space.
In \dmc{}, this issue does not arise because \d4 invokes a SAT solver at every decision node to ensure branch satisfiability.

However, most modern model counters forgo satisfiability checks at every decision node to avoid computational overhead. 
This design choice complicates the direct integration of our distributed framework with such solvers, as unverified branches can lead to cache inconsistencies. 
Fortunately, this can be mitigated by ensuring that all connected components of an \textsf{AND} node are verified as satisfiable before the node is marked as eligible for sharing. While this introduces a cost exceeding standard \bcp{}, the overhead remains negligible in practice because the solver is not invoked at every decision. Moreover, since work sharing is restricted to the upper levels of the search tree, the number of \textsf{AND} nodes requiring verification is relatively small. Consequently, the total number of calls to the SAT oracle is minimized, ensuring that these safety checks do not degrade global search performance.

\subsection{Worker Invocation}

Following the protocol established by \discount~\cite{DBLP:conf/kr/LagniezL25}, our framework invokes underlying model counters in \emph{assumptions mode}, a technique derived from incremental SAT solving~\cite{DBLP:conf/sat/NadelR12}. 
In this configuration, the sub-problem assigned to a worker is defined by a set of \emph{assumptions} (literals asserted only for the duration of a specific solver invocation). 
This approach is particularly advantageous for distributed model counting as it allows the underlying solver to retain its internal state across successive tasks. 
Since the base formula $\Sigma$ remains static, workers can preserve variable activity data and \emph{learnt clauses}, which remain logically valid for any sub-problem derived from $\Sigma$. 
Furthermore, maintaining a persistent component cache across multiple counting queries has been shown to be highly beneficial~\cite{DBLP:conf/aaai/LagniezM19}. 
By utilizing incremental invocations, \gdmc\ enables workers to reuse cache entries and variable weights, significantly accelerating the processing of related sub-problems.

Unlike distributed counters that treat each task as an isolated query, each worker in \gdmc\ manages a local collection of stacks. Upon receiving a sub-problem with a specific $idTask$, a worker initializes a corresponding stack to track the search and any internal decompositions. 
Rather than returning partial counts immediately to the Master, workers accumulate these solved sub-stacks locally to avoid frequent transfers of large arithmetic values. 
Once the Master determines that the entire search space has been covered (signaled by all workers entering the \emph{idle} state) it triggers the final reduction phase. 
During this phase, workers perform a decentralized reduction, propagating computed values back to the original victims via the stored $idTask$ dependencies. 
This strategy minimizes network congestion and ensures that the final model count for the entire formula ($idTask=0$) is only resolved once all dependencies are satisfied.

\section{Experiments~\label{empirical}}
Our framework is implemented in \texttt{C++} as an extension of \discount, leveraging concepts~\cite{DBLP:conf/popl/ReisS06} and templates.
We utilize concepts to define strict interfaces for the model counter, enabling seamless interchangeability of the stack manager and communication protocols.
\gdmc{} supports arbitrary-precision aggregation. Furthermore, this template-based architecture simplifies \textsf{MPI} integration by eliminating the need for invasive modifications to the solver's build configuration. The source code and experimental logs are available on Zenodo (\url{https://zenodo.org/records/20157902}).

We use \texttt{Glucose 3.0}~\cite{DBLP:conf/sat/AudemardLS13} in incremental mode and \texttt{FlowCutter}~\cite{DBLP:journals/jea/HamannS18} (PACE 2017) for tree decompositions (10s budget). Preprocessing utilizes \texttt{B+E}~\cite{DBLP:conf/ijcai/LagniezLM16} with the \texttt{equiv} option (vivification, backbone, occurrence elimination). To prevent worker idle time caused by slow, serial definability checks, we limit this phase to 5 s. Finally, we upgraded \texttt{SharpSAT-TD}~\cite{DBLP:conf/cp/KorhonenJ21} to support incremental assumptions.

\begin{figure}[t]
    \centering
 \includegraphics[width=0.45\textwidth]{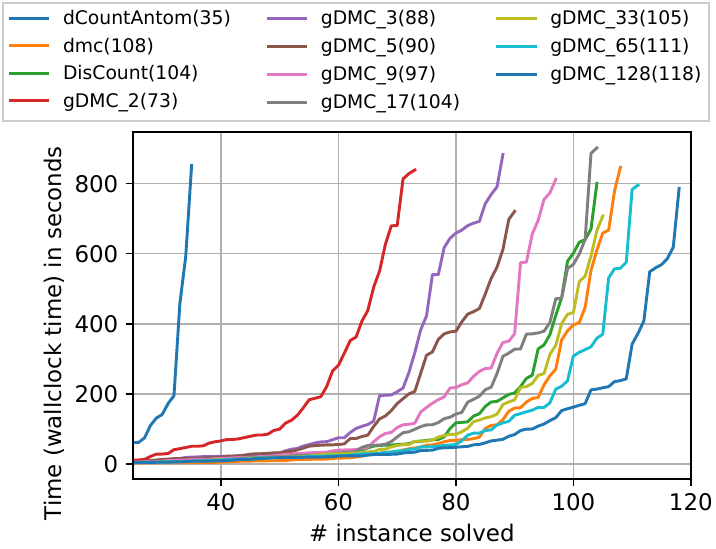}
\caption{\label{fig:cacuts}
Solved instances over time for \texttt{dCountAntom}, \texttt{dmc}, \discount, and \gdmc{} configurations. Parentheses in the legend indicate the total number of solved instances.
}
\end{figure}

We evaluated our approach against state-of-the-art distributed model counters \texttt{dCountAntom}~\cite{DBLP:conf/cluster/BurchardSB16}, \texttt{dmc}~\cite{DBLP:conf/ijcai/LagniezMS18}, and \discount~\cite{DBLP:conf/kr/LagniezL25}.
For a fair and accurate comparison, \discount~use \texttt{SharpSAT-TD} as the base solver.
For our experiments, work-stealing requests are issued periodically every $k=1,000$ decisions, and we do not share formulas with fewer than 30 variables. 
The benchmarks were taken from the most recent competition~\cite{Competition2021_23}. 

All experiments ran on a cluster of thirty-two Intel\textsuperscript{\textregistered} Xeon\textsuperscript{\textregistered} E5{-}2643 v4 CPUs at 3.30\,GHz, with Rocky Linux 9.5 (Linux kernel 5.14). Nodes are connected by 1\,GiB/s Ethernet. The software environment used GCC 11.5 and Open~MPI 5.1.0a1. Finally, a wall-clock time limit of 900 seconds and a memory limit of 32\,GiB were imposed on each run.

Figure~\ref{fig:cacuts} presents a cactus plot of solved instances over time for \dcountAntom, \dmc, \discount, and \gdmc. To assess the scalability of \gdmc, we evaluated configurations of $2, 3, 5, 9, 17, 33, 65$ and $128$ cores. Since \gdmc{} requires a dedicated master node, these configurations correspond to $1, 2, 4, \dots, 64$, and $127$ active workers, respectively. The competing solvers \dcountAntom, \dmc, and \discount~were executed using the full 128-core allocation.

The results demonstrate that increasing the core count significantly reduces the solving time for \gdmc{}, maximizing the number of instances solved within the limit. 
In the 128-core comparison, \dcountAntom{} is the least efficient (35 solved), followed by \discount{} (104) and \dmc{} (108), with \gdmc{} achieving the best performance in both count and speed.
Notably, contrary to findings in~\cite{DBLP:conf/kr/LagniezL25}, \discount{} trails slightly behind \dmc{} in this evaluation. 
We attribute this discrepancy to two distinct factors. 
First, the increased difficulty of this year's benchmarks hinders \discount's ability to effectively partition the search space into cubes. 
Second, \dmc{} uses double-precision arithmetic, thereby avoiding the computational overhead of the arbitrary-precision arithmetic employed by \discount{}.

\begin{figure}[t]
    \centering
\includegraphics[width=0.45\textwidth]{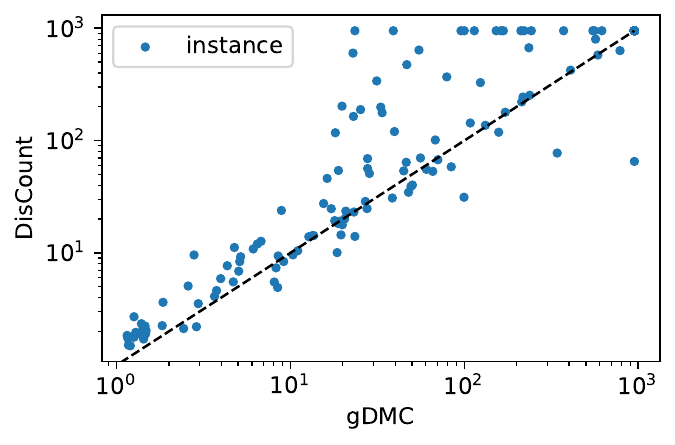}
\caption{\label{fig:scatter}
Scatter plot comparing solving times (in seconds) of \discount~($y$-axis) and \gdmc{} ($x$-axis) on a logarithmic scale. Each point corresponds to a single benchmark instance.}
\end{figure}

Figure~\ref{fig:scatter} compares solving times of \discount~and \gdmc{}; points above the diagonal favor \gdmc{}. \gdmc{} solves more instances, evidenced by top-border points (\discount~timeouts), and is consistently faster. This gap widens with instance difficulty. We attribute this efficiency to two factors: \gdmc{} avoids cube-generation overhead on simple instances, while dynamic work-stealing balances load on complex ones, mitigating the long-tail latency of static partitioning.


\begin{figure}[t]
    \centering
    \includegraphics[width=0.45\textwidth]{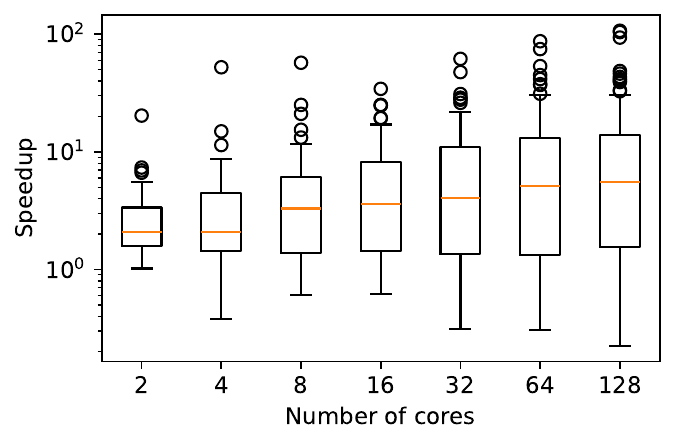}
    \caption{Distribution of speedups achieved by \gdmc{} across varying core counts, relative to the single-worker baseline (2 cores).\label{fig:speedup-boxplot}}
    
\end{figure}

Figure~\ref{fig:speedup-boxplot} depicts the distribution of speedups achieved by \gdmc{} as computational resources scale from 2 to 128 cores. 
The $y$-axis represents the speedup factor relative to a baseline configuration of 2 cores (1 master and 1 worker). 
For any instance solved in $t_n$ seconds using $n$ cores, the speedup is computed as $\frac{\min(t_{\text{2}}, 900)}{t_n}$, where $t_{\text{2}}$ is the baseline solving time. 
Since $t_{\text{2}}$ is capped at the 900-second timeout, these reported values constitute a conservative lower bound on the actual speedup for instances that time out in the baseline configuration.
The results demonstrate robust scalability.
We observe a consistent upward trend in the median speedup (indicated by the orange line) as the number of cores increases, confirming that \gdmc{} effectively utilizes additional worker nodes. 
Furthermore, the presence of high-performing outliers indicates that for many complex instances, the framework achieves near-linear speedups. 
This suggests that the dynamic work-stealing strategy successfully mitigates the idle time and communication overhead typically associated with distributed exact counting.

\section{Conclusion and Perspectives~\label{conclusion}}
In this paper, we presented \gdmc{}, a generic, solver-agnostic framework for DPLL-style distributed exact model counting. 
By leveraging C++ concepts and templates, we decoupled distributed orchestration from the core solving logic, enabling zero-overhead integration with state-of-the-art counters like \sharpsat{}. Our approach employs a novel \emph{point-to-group} work-stealing strategy and a local aggregation mechanism that ensures soundness and cache consistency without the bottleneck of global result streaming.

Future work focuses on three key directions. First, we plan to integrate probabilistic counters such as \ganak. Second, we aim to exploit our solver-agnostic design for DPLL-style counters to implement a heterogeneous portfolio, running distinct solvers concurrently to dynamically leverage their respective strengths on sub-problems. Finally, we intend to evaluate the scalability of our architecture on high-performance computing clusters with thousands of cores, testing the limits of our synchronization protocols in massively parallel environments.

\section*{Acknowledgments}
This work has been partly supported by the CERADOC project of the French National Agency for Research (ANR-25-CE23-3078), National Natural Science Foundation of China (No. 62532014), Jilin Province Science and Technology Department Project (20240602005RC), Scientific Research Project of the Education Department of Jilin Province (JJKH20250334KJ).

\bibliographystyle{named}
\bibliography{ijcai26}

\end{document}